\newcommand\numberthis{\addtocounter{equation}{1}\tag{\theequation}}
\DeclareMathOperator*{\argmax}{argmax}
\newtheorem{theorem}{Theorem}
\newtheorem{corollary}{Corollary}
\newtheorem{definition}{Definition}
\newcommand{\bpara}[2]{\noindent\textbf{#1~~}{#2}}
\begin{document}
%
% --- Author Metadata here ---
% \conferenceinfo{Greenmetrics}{'16 }
%\CopyrightYear{2007} % Allows default copyright year (20XX) to be over-ridden - IF NEED BE.
%\crdata{0-12345-67-8/90/01}  % Allows default copyright data (0-89791-88-6/97/05) to be over-ridden - IF NEED BE.
% --- End of Author Metadata ---

\title{Achieving Energy Efficiency in Cloud Brokering}
\author{Lin Wang$^\ast$, Lei Jiao$^\dagger$, Mateusz Guzek$^\ast$, \\Dzmitry Kliazovich$^\ast$, Pascal Bouvry$^\ast$ \vspace{0.15cm} \\
$^\ast$ SnT, University of Luxembourg \\$^\dagger$ Bell Labs
}
%% There's nothing stopping you putting the seventh, eighth, etc.
%% author on the opening page (as the 'third row') but we ask,
%% for aesthetic reasons that you place these 'additional authors'
%% in the \additional authors block, viz.
%\additionalauthors{Additional authors: John Smith (The Th{\o}rv{\"a}ld Group,
%email: {\texttt{jsmith@affiliation.org}}) and Julius P.~Kumquat
%(The Kumquat Consortium, email: {\texttt{jpkumquat@consortium.net}}).}
\date{11 April 2016}
% Just remember to make sure that the TOTAL number of authors
% is the number that will appear on the first page PLUS the
% number that will appear in the \additionalauthors section.

\maketitle
\begin{abstract}
The proliferation of cloud providers has brought substantial interoperability complexity to the public cloud market, in which cloud brokering has been playing an important role. However, energy-related issues for public clouds have not been well addressed in the literature. In this paper, we claim that the broker is also situated in a perfect position where necessary actions can be taken to achieve energy efficiency for public cloud systems, particularly through job assignment and scheduling. We formulate the problem by a mixed integer program and prove its NP-hardness. Based on the complexity analysis, we simplify the problem by introducing admission control on jobs. In the sequel, optimal job assignment can be done straightforwardly and the problem is transformed into improving job admission rate by scheduling on two coupled phases: data transfer and job execution. The two scheduling phases are further decoupled and we develop efficient scheduling algorithm for each of them. Experimental results show that the proposed solution can achieve significant reduction on energy consumption with admission rates improved as well, even in large-scale public cloud systems.

\end{abstract}

% A category with the (minimum) three required fields
%\category{C.2}{Computer Communication Networks}{Distributed Systems}
%A category including the fourth, optional field follows...
%\category{D.4.8}{Performance}{Simulation}

%\terms{Algorithms, Performance}

%\keywords{Cloud brokering, public clouds, data centers}

\section{Introduction}
\label{sec:intro}

Cloud computing has been proven to be one of the most successful computing models in the past decade. To keep pace with the proliferation of online cloud services, enormous number of mega data centers have been built widely. However, complex interoperability between public cloud providers and tenants is becoming a big obstacle for the fast and flexible deployment and operating of new cloud services. Serving as an intermediate entity between cloud providers and tenants, cloud brokers has brought about many great benefits to the cloud market, among which flexibility possesses its prominence. In this new model, a Cloud Service Broker (CSB) rents either resources or services from multiple cloud providers and then resell them to tenants. Depending on tenants' needs, a CSB may pack different services or integrate them with its own added-value services such as data encryption. The brokering model directly leads to the fact that data storage or processing would need to be coordinated among a set of geographically distributed data centers from multiple cloud providers.

Apart from increased deployment flexibility, cloud brokering also brings another big opportunity for reducing the energy cost for public cloud providers. It is evident that the explosive expansion of data centers has resulted in a severe environmental concern over energy consumption or carbon footprint. To alleviate this situation, a large body of energy-efficient architectures or algorithms for single data centers or private clouds has been proposed (e.g., \cite{Islam-Colo-2015, Lin-DRS-2011, Wang-GreenDCN-2014}), but little attention has been paid on the energy efficiency of public clouds. In the context of cloud brokering, a broker can have a global view over not only the providers, but also the subscribed tenants, which enables a comprehensive understanding on resource demand and provision balance in the system. In the ideal case, the broker can choose the most appropriate data center site (with minimized energy cost) to execute each job demand from the tenants based on this global view. We notice that reducing the energy cost of data centers is not always aligned with the broker's interest of maximizing its own profit. However, the broker can be simply incentivized by a deliberate pricing policy from the providers \cite{Qiu-CSB-2015}. Our work will assume that such a pricing policy has been applied already.

\begin{figure}
\centering
\includegraphics[scale=0.65]{./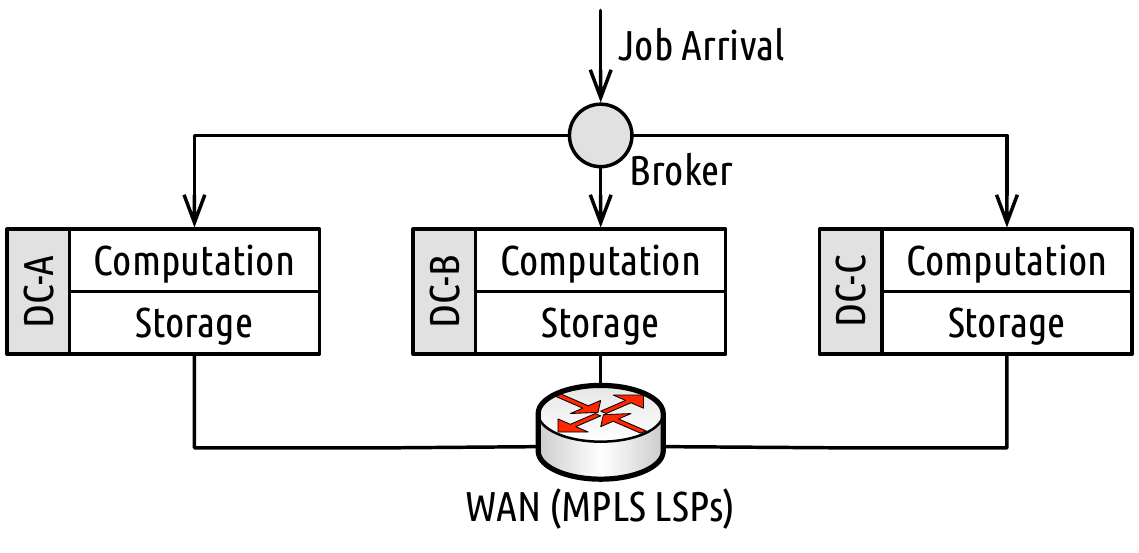}
\caption{\label{fig:framework}An overview of the proposed cloud brokering model. A broker is in charge of job dispatching, data movement, and job scheduling among a set of geo-distributed data centers.}
\end{figure}

In this paper, we investigate the problem of optimizing the energy cost of geographically distributed data centers from the broker's point of view. We consider an abstracted brokering model where a broker is in charge of the use of the hardware resources\footnote{For simplicity we only consider the dominant resource dimensions, i.e., computation, storage, and networking.} from a set of geo-distributed data centers and accept job requests from its subscribed tenants. An overview of the adopted cloud brokering model is depicted in Figure~\ref{fig:framework}. We restrict our attention on batch workloads (mostly based on data analytical stacks like Hadoop) such as web indexing and data mining which are dominant in large organizations today \cite{Facebook-data}. Additionally, these jobs are data intensive and are more tolerant to processing delay, providing more flexibility for us in job assignment and scheduling.\footnote{In contrast, those long-running, stateless services such as chat or three-tier web applications usually adopt a replicated execution model, where the very limited amount of data is simply replicated among multiple data center sites.} In this model, we assume that the data to be processed for each job has been pre-stored in one of the data centers by the tenant. Upon job arrivals, the broker has to decide \textit{the data center each job will be assigned to} and \textit{the execution order of the jobs that have been assigned to the same data ceter}. If the job is assigned to a different data center than where its associated data is stored, the data has to to transferred to the target data center, which inevitably incurs an overhead of networking cost. The objective is to minimize the overall cost (i.e., energy and networking costs) from all the data centers, while guaranteeing the deadline given to each job. Our model characterizes the real scenarios and to the best of our knowledge it has not been sufficiently covered in the literature.

To solve this problem, we first formalize the model and formulate the problem with a mixed integer program. Non-surprisingly, the problem can be proven to be NP-hard. Moreover, we show that regardless of optimality, even deciding whether the problem has a feasible solution or not is already NP-complete. This is due to the fact that job scheduling with deadline constraints is hard in general even under a single-machine setting. Then, based on the complexity results, we observe the necessity of introducing admission control (to block some jobs out if their deadlines cannot be met) while pursuing optimality in terms of energy cost in the system. Hereafter, the problem is transformed into maximizing the admission rate by job scheduling while maintaining the best cost-effectiveness by assigning jobs to `cheaper' data centers. While the optimal job assignment can be done straightforwardly, the problem of scheduling for maximized admission rate is still non-trivial. Finally, the scheduling problem is decoupled into two phases: data transfer and job execution, each of which is handled separately.

Our contributions can be summarized into the following three aspects: $i)$ We formulate the job assignment and scheduling problem into a mixed integer program and show its time complexity. $ii)$ We introduce admission control to simplify the problem and transform the problem into a novel yet more general problem -- Flow Shop Scheduling with Coupled Resources (FSS-CR). $iii$) We propose an efficient algorithm for FSS-CR and validate its performance by numerical simulations.

The remainder of this paper is organized as follows. Section~II summarizes related work. Section~III provides the system model, the problem formulation, and complexity analysis. Section~IV presents our algorithm design. Section~V validates the performance of the proposed algorithm by simulations. Section~VI discusses possible extensions and future directions and Section~VII concludes the paper.

\section{Related Work}
\label{sec:related}

There has been a large body of work on both cloud brokering and cost reduction in geo-distributed data center systems. In this section, we summarize the most representative ones and differentiate them from our study.

\bpara{Cloud brokering and federation.}{
Functioning as an intermediary between cloud providers and tenants, a cloud broker can largely simplify the deployment and improve the provisioning flexibility of cloud services for enterprises in the public cloud market. In the sequel, cloud brokering has ignited the focus on both designing sustainable pricing models and developing technical solutions for service interoperability \cite{Guzek-CSB-2015}. Similarly, cloud federation is based on the idea of a smart sharing of workloads among multiple cloud providers based on some mutual agreement. Cerroni \cite{Cerroni-DCC-2014} provides a comprehensive analytical model for joint dimensioning of shared network and data center capacity in a federate cloud by taking advantage of virtual machine live migration. }

\bpara{Energy efficiency in geo-distributed data centers.}{Ren \emph{et al.} \cite{Ren—GreFar-2012} propose GreFar, a provably-efficient online scheduling algorithm for batch job scheduling to achieve energy efficiency and fairness in geo-distributed data centers by exploring the benefit of electricity price variations across time and location. However, the cost for data movement is not included in their model. Xu \emph{et al.} \cite{Xu-TPDS-2015} explore the correlation between ambient temperature and cooling efficiency. By taking advantage of both time and geographical diversity of the temperature at individual locations, they advocate a joint optimization of request routing for interactive workloads and capacity allocation for batch workloads and develop a distributed algorithm for achieving cost effectiveness in geo-distributed data centers. Buchbinder \emph{et al.} \cite{Buchbinder-2011} study the problem of migrating jobs among data centers to achieve reduced electricity cost by exploiting the variation of electricity prices both temporally and geographically. They consider not only the energy expenses but also the bandwidth cost introduced by job migration. However, the assumption that the job migration time is negligible is not realistic as moving data between data centers, especially for data-intensive jobs, usually takes a remarkable amount of time, leading to a significant delay in job execution. Qiu \emph{et al.} \cite{Qiu-CSB-2015} propose a pricing model to enable broker's incentive to reduce providers' energy cost, based on which they develop demand allocation mechanisms to achieve the best energy efficiency while maintaining a high level of resource utilization.

In contrast, our model takes into account data movement in terms of both time and cost. We try to reduce the joint cost of energy and networking and to ensure Service Level Agreement (SLA) by enforcing hard deadlines for job completion through joint job assignment and scheduling.}

\section{The Model}
\label{sec:model}

In this section, we present the system model, formulate the problem and carry out complexity analysis for the problem.

\subsection{System Model}

We adopt a discrete time model where the length of a time slot matches the time scale at which job dispatching and scheduling decisions are made, e.g., hourly. Different from long-running, stateless workloads, back-end batch workloads usually have better tolerance on processing delay, as long as they can be accomplished by a relatively loose deadline. This deadline is usually used to ensure the availability of the processing result before certain transactions.

We consider a cloud broker (or a cloud federation) who manages the hardware resources from a set of $m$ data centers in distinct geographical regions, denoted by $\mathbf{S} = \{S_1, S_2,...,S_m\}$. Each of the data center $S_i \in \mathbf{S}$ is equipped with a certain number of servers and the maximum computing capacity is captured by $C_i$.\footnote{The computing capability of a data center is quantified by the number of instructions executed per second (in MIPS).} All the data centers are connected by a Wide Area Network (WAN) where MPLS Label Switching Paths (LSPs) are pre-established between data center pairs for direct communication, as depicted in Figure~\ref{fig:framework}. We assume no bottleneck in the network core, i.e., the only bottleneck exists in the connection between each data center and the core. For simplicity we abstract the network as one non-blocking switch with \emph{heterogeneous} transmission rates on ports (as shown in Figure~\ref{fig:wan-switch}) and thus, the downlinks and uplinks are the only sources of contention. Note that the heterogeneity brings novelty, as well as new challenges, to the problem. The download and upload network bandwidth of each data center $S_i$ is upper-bounded by $B_i^{in}$ and $B_i^{out}$, respectively.

\begin{figure}
\centering
\includegraphics[scale=0.65]{./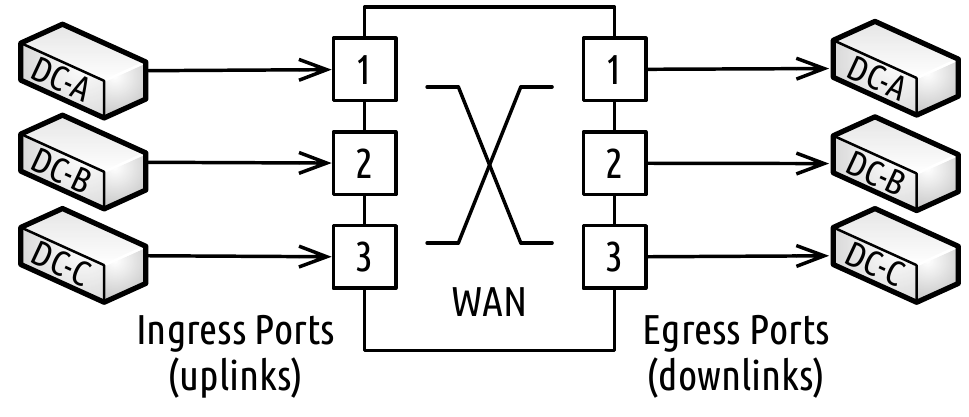}
\caption{\label{fig:wan-switch} A non-blocking switch abstraction for the WAN spanning three geo-distributed data centers. }
\end{figure}

The broker receives job requests from the various tenants and determines the combination of choices for the following factors: the data center to assign each job and the order in which jobs are scheduled for both cross-data center data transfer and job execution at each data center. We call this problem ASCO -- Assigning and Scheduling for Cost Optimization.

\subsection{Job Characterization}

We are given a set of $n$ jobs $\mathbf{J} = \{J_1, J_2,..., J_n\}$ that need to be executed in the aforementioned set of geo-distributed data centers. These jobs are requested by cloud tenants for data processing. Each of the job is associated with a piece of data that has already been stored in one of the data centers following some data storage policies \cite{Liu-Storage-2015}. The parameters for describing each job $J_j \in \mathbf{J}$ are given by a five-tuple $\langle$$a_j$, $b_j$, $l_j$, $d_j$, $S_{d, j}$$\rangle$, where the elements of the five-tuple are defined in Table~\ref{tb:notation}.
\begin{table}
\centering
\caption{\label{tb:notation}Job Parameters}
\begin{tabular}{c|l} \hline
	$a_j$, $b_j$ & arriving time and deadline\\ \hline
	$l_j$ & computation workload \\ \hline
	$d_j$ & volume of input data \\ \hline
	$S_{d,j}$ & data center that currently stores the data \\
\hline\end{tabular}
\end{table}
Each job has to be completed before its deadline. Note that in order to achieve resource efficiency and to reduce energy cost, it is not necessary to have a job to be executed in the same data center that the data for the job resides. As a result, if a job is assigned to a data center that fortunately stores its associated data, it will be scheduled and executed directly; otherwise, the data for the job has to be first transferred to the assigned data center and then, the data processing is carried out. 

The broker will first need to make decisions on which data center to assign each job. We denote $x_{i,j} \in \{0, 1\}$ as a decision variable indicating whether job $J_j$ is assigned to data center $S_i$, where
\begin{equation}
x_{i,j} = \left\{
  \begin{array}{ll} 
   1 & \text{if } J_j \text{ is assigned to } S_i, \\
   0 & \text{otherwise}.
  \end{array}
\right.
\end{equation}
We also denote $y_j \in \{0, 1\}$ as an indicator for whether the assigned data center $S_i$ for job $J_j$ is the same as the one $S_{d, j}$ that stores the data for this job, i.e., $y_j = 1$ if $S_i = S_{d,j}$; $y_j = 0$ otherwise.

\subsection{Total Cost}

We denote by $H$ the combined cost for processing a set of jobs $\mathbf{J}$. The cost $H_i$ we consider here for each data center $S_i$ consists in two parts: energy cost $E_i$ and network cost $N_i$. We assume that the energy cost is linearly related to the workload and the electricity price given by $P_i$ (i.e., energy cost per unit of computation) at site $S_i$. The electricity cost varies in different sites as in different geographical locations the electricity generation cost can be different. Note that we assume a simplified model for tractability but it can be further calibrated to a more sophisticated model by taking into account also cooling efficiency due to temperature fluctuation in both time and location dimensions at different sites \cite{Xu-TPDS-2015}.

The network cost $N_i$ is proportional to the volume of data being transferred across data centers which covers both upload and download traffic. To stay generic we assume that the prices for sending and receiving unit of data at the same site are not identical and both prices are also heterogeneous among the sites. Denote by $Q_i^{in}$ and $Q_i^{out}$ as the download and upload network price at site $S_i$, respectively. We note that there are also network cost associated with the arrival of jobs into the system (e.g., requests from the tenant) and leaving the system (e.g., delivering the result to the tenant). However, those costs are usually small and are independent with the job assignment and scheduling. Thus, they are omitted from our model.

For a given job $J_j$ being processed in data center $S_i$, the energy cost can be expressed simply by $P_i \cdot l_j$, while the network cost is captured by $y_j \cdot d_j \cdot (Q_{d,j}^{out} + Q_i^{in})$. Denoting by $\mathbf{J}_i$ the set of jobs that are assigned to data center $S_i$, i.e.,
\begin{equation}
\mathbf{J}_i = \{ J_j~|~J_j \in \mathbf{J} \wedge x_{i,j} = 1 \},
\end{equation}
the total cost $H_i$ at site $S_i$ is given by
\begin{equation}
H_i = \sum_{J_j \in \mathbf{J}_i} \left( \sum_{S_i \in \mathbf{S}} x_{i,j} \cdot P_i \cdot l_j + y_j \cdot d_j \cdot (Q_{d,j}^{out} + Q_i^{in}) \right).
\end{equation}
The total cost $H$ of the system is given by the sum of the costs at all sites in the system, i.e., $H = \sum_{S_i \in \mathbf{S}} H_i$, which we aim to minimize.

\subsection{Problem Formulation}

The goal of the ASCO problem is to assign jobs to proper data centers such that the total cost $H$ is minimized, while the deadlines of all the jobs can be respected. We denote by $t_0$ and $t_1$ the earliest arriving time and the latest deadline of the jobs in set $\mathbf{J}_i$, i.e., $t^a_i = \min\{a_j~|~J_j \in \mathbf{J}_i\}$ and $t_i^b = \max\{b_j~|~J_j \in \mathbf{J}_i\}$. Given an arbitrary non-empty subset $\widehat{\mathbf{J}_i}$ of $\mathbf{J}_i$, we denote by $\widehat	{t^a_i}$ and $\widehat{t^b_i}$ as the corresponding earliest arriving time and the latest deadline for the jobs in $\widehat{\mathbf{J}_i}$. The minimum total computation time for all the jobs in $\widehat{\mathbf{J}_i}$ is given by
\begin{equation}
\sum_{J_j \in \widehat{\mathbf{J}_i}} \frac{l_j}{C_i}, \label{comp}
\end{equation}
while the minimum total communication time taken by transferring the data for the jobs in $\widehat{\mathbf{J}_i}$ can be represented by
\begin{equation}
\sum_{J_j \in \widehat{\mathbf{J}_i}} y_j \cdot \frac{d_j}{\min(B_{d,j}^{out}, B_i^{in})}. \label{comm}
\end{equation}
Putting everything together, the optimization problem ASCO can be further formulated as the following integer program.
\begin{equation}
\begin{aligned}
&(P_1)~~~~\min  H  \\
\text{subject to}  \\
 &(\ref{comp}) + (\ref{comm}) \leq \widehat{t_i^b} - \widehat{t_i^a} & \forall \widehat{\mathbf{J}_i} \subseteq \mathbf{J}_i  \\
&\sum_{S_i \in \mathbf{S}} x_{i, j} = 1 & \forall J_j \in \mathbf{J} \\
& x_{i,j} \in \{0, 1\} & \forall S_i \in \mathbf{S}, \forall J_i \in \mathbf{J} \\
\end{aligned}
\nonumber
\end{equation}
The first inequality ensures that all the jobs can be completed before their deadlines at the given site. The second constraint is to force that every job is assigned to one and only one site, while the last constraint is the binary constraint for the decision variable $x_{i,j}$. Note that $y_j$ is a variable whose value is totally associated with $x_{i,j}$ so no decision is needed on $y_j$.

\subsection{Complexity Analysis}
We now analyze the complexity of the afore-defined ASCO problem. We first introduce a new problem called F-ASCO, which is defined as deciding whether there exists a feasible solution to the ASCO problem, \textit{regardless of} the total cost. We show the complexity of the F-ASCO problem in the following.
\begin{theorem}
The F-ASCO problem is NP-complete.
\end{theorem}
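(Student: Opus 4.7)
The plan is to prove NP-completeness by first establishing membership in NP and then reducing from the classical PARTITION problem, both of which exploit the structure of the IP $(P_1)$ directly.

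For NP membership, the natural certificate is the assignment matrix $x = (x_{i,j})$ itself, which has polynomial size. Given $x$, the indicator $y_j$ is uniquely determined, so the ``effective processing time'' $p_{i,j} = l_j / C_i + y_j \cdot d_j / \min(B_{d,j}^{out}, B_i^{in})$ can be computed in constant time per job. The only subtle point is that the first family of constraints in $(P_1)$ is indexed by every subset $\widehat{\mathbf{J}_i} \subseteq \mathbf{J}_i$, so a naive verifier is exponential. I would observe, however, that for a fixed pair of values $(a, b)$ drawn from the at most $2n$ distinct arrival times and deadlines, the \emph{canonical} subset $\{J_j \in \mathbf{J}_i : a_j \geq a,\; b_j \leq b\}$ simultaneously maximises $\sum p_{i,j}$ and fixes $\widehat{t^a_i} = a$, $\widehat{t^b_i} = b$ among all subsets sharing these extrema. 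Hence it suffices to verify the constraint on $O(mn^2)$ such canonical subsets, which is polynomial in the input size.

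For NP-hardness I would reduce from PARTITION. Given positive integers $\alpha_1, \ldots, \alpha_n$ with $\sum_j \alpha_j = 2B$, construct an F-ASCO instance with $m = 2$ data centers of computing capacity $C_1 = C_2 = 1$ and arbitrarily large bandwidths, together with $n$ jobs having $a_j = 0$, $b_j = B$, $l_j = \alpha_j$, and $d_j = 0$, all placed at an arbitrary data center. Because $d_j = 0$, the communication term in expression~(\ref{comm}) vanishes regardless of the assignment, so the subset constraint at each data center $S_i$ reduces to $\sum_{J_j \in \mathbf{J}_i} \alpha_j \leq B$ (the full-set subset is the tightest one since all windows coincide). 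Combined with $\sum_j \alpha_j = 2B$, a feasible $x$ exists if and only if one can split the $\alpha_j$ into two classes of equal sum $B$, which is exactly the PARTITION instance. The construction is clearly polynomial-time.

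The main obstacle is the NP-membership step rather than the hardness step: one must argue that the exponentially many subset inequalities can be certified in polynomial time, which rests on the canonical-subset observation above. Once that is in place, the reduction from PARTITION is immediate. I would note in passing that replacing PARTITION by 3-PARTITION or BIN-PACKING in the same construction would upgrade the result to strong NP-completeness, aligning with the paper's remark that the difficulty already appears in the single-data-center regime through the hardness of sequencing with release times and deadlines.
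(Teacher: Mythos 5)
Your proof is correct, and it differs from the paper's in two substantive ways. For hardness, the paper reduces from Minimum Makespan Scheduling by creating one data center per machine, infinite bandwidths, common arrival time, and a common deadline equal to the \emph{optimal} makespan $OPT_0$; you instead reduce from PARTITION with two unit-capacity sites, deadline $B$, and zero data volume. These are the same idea at heart (PARTITION is exactly the decision version of two-machine MMS), but your version is executed more carefully: the paper's construction needs $OPT_0$ to build the instance, which is itself NP-hard to compute, so as written it is not a polynomial-time reduction --- the standard fix is precisely what you do, namely use the decision threshold as the deadline. Your construction is clean; the only cosmetic caveat is that $d_j=0$ is a degenerate data volume, which you already cover by allowing arbitrarily large bandwidths so the term in expression~(\ref{comm}) vanishes for any small positive $d_j$. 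The second, more important difference is that you actually argue membership in NP, which the paper omits entirely even though it is the nontrivial half here: the constraint family of $(P_1)$ is indexed by all subsets $\widehat{\mathbf{J}_i}\subseteq\mathbf{J}_i$, so a naive verifier is exponential. Your canonical-subset argument is sound --- for any subset $S$ with extrema $(a,b)$, the set $\{J_j : a_j\ge a,\ b_j\le b\}$ contains $S$, has the same extrema, and dominates its left-hand side, so checking the $O(mn^2)$ canonical subsets suffices. Your closing remark that substituting 3-PARTITION yields strong NP-completeness is also correct and strengthens the result beyond what the paper claims.
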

\begin{proof}
%(Sketch) With the objective function removed from ASCO, F-ASCO aims to find a feasible assignment and schedule such that the deadlines of all the jobs are satisfied. By assuming homogeneity on data centers, the problem is thus identical to the decision version of the classical Minimum Makespan Scheduling (MMS) problem, which is NP-complete.
The goal of the F-ASCO problem is deciding whether there is an assignment of jobs to sites, together with a schedule of the jobs at all sites, such that the deadlines of all the jobs are respected in the ASCO problem, regardless of the total cost. The proof can be conducted by a polynomial time reduction from the classical Minimum Makespan Scheduling (MMS)  problem whose decision version is NP-complete even if there are only two identical machines \cite{Garey-MMS-1979}. 

We start from an MMS instance where we are given a set of identical machines indexed by the set $M = \{1, ..., m\}$ and a set of jobs indexed by the set $J = \{1,...,n\}$ to be assigned to the machines. Each job contains a certain workload $w_j$ ($j \in [1, n]$) to be processed. The goal is to assign and process the jobs on the machines and the objective is to minimize the makespan, i.e., the maximum completion time of the machines. We denote by $OPT_{0}$ the minimum makespan that can be achieved.

From the above MMS instance we now construct an instance for the F-ASCO problem. We assume each machine in $M$ represents a data center site $S_i$ and we have in total $n$ sites given by the set $\mathbf{S}$ Each job $j$ in $J$ represents a job $J_j \in \mathbf{J}$ where $\mathbf{J}$ denotes the set of jobs for the F-ASCO instance. For all the jobs in $\mathbf{J}$, we assume they arrive at the same time and have the same deadline as $OPT_0$. We also assume that the network bandwidths at every site are infinite and the time used for data transmission is thus negligible. The question we need to answer in the F-ASCO problem instance then becomes to correctly decide whether there exists a schedule for the jobs to data center sites such that all the jobs can be completed within $OPT_0$. It is easy to confirm that our answer to the F-ASCO instance is YES if, and only if, we solve the MMS instance and find its optimal solution. That completes the proof.
\end{proof}

The following result then can be directly derived from the above theorem as pursuing optimality is obviously one step further than identifying feasibility.

\begin{corollary}
The ASCO problem is NP-hard.
\end{corollary}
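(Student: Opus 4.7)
The plan is to obtain the corollary as an immediate consequence of Theorem~1 by showing that any algorithm that solves the ASCO optimization problem must, as a by-product, decide feasibility. Concretely, I would set up a polynomial-time Turing reduction from F-ASCO to ASCO: given an arbitrary F-ASCO instance (a set of jobs, deadlines, data locations, bandwidths, and data centers), I would feed exactly the same data into an ASCO oracle. The oracle, by definition, either returns an optimal cost-minimizing assignment and schedule that respects every deadline, or it reports that no feasible solution exists; answering YES in the former case and NO in the latter immediately resolves F-ASCO.

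Formally, I would argue as follows. Suppose, for contradiction, that ASCO admits a polynomial-time algorithm $\mathcal{A}$. Given an F-ASCO instance $\mathcal{I}$, construct the ASCO instance $\mathcal{I}'$ obtained by copying $\mathcal{I}$ verbatim; this construction is clearly polynomial. Run $\mathcal{A}$ on $\mathcal{I}'$. Since $\mathcal{A}$ must either produce an optimal feasible $(x_{i,j})$ (with finite cost $H$) or certify infeasibility of the constraint set in $(P_1)$, inspect its output and answer F-ASCO accordingly. Because Theorem~1 asserts that F-ASCO is NP-complete, the existence of $\mathcal{A}$ would place an NP-complete problem in $\mathrm{P}$, implying $\mathrm{P}=\mathrm{NP}$. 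Therefore no such $\mathcal{A}$ exists unless $\mathrm{P}=\mathrm{NP}$, and ASCO is NP-hard.

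There is no real obstacle here: the reduction is the standard ``optimization is at least as hard as feasibility'' observation, and all the technical work has already been done in the proof of Theorem~1. The only point worth being slightly careful about is to state explicitly that an ASCO algorithm is assumed to distinguish infeasible instances from feasible ones (which is inherent in the definition of the optimization problem, since otherwise its output would not even be well-defined when the deadline constraints in $(P_1)$ cannot be simultaneously satisfied). With that convention in place, the proof reduces to a two-line argument invoking Theorem~1.
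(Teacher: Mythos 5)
Your proposal is correct and matches the paper's reasoning: the paper derives the corollary in one line from Theorem~1 by observing that ``pursuing optimality is one step further than identifying feasibility,'' which is exactly the optimization-subsumes-feasibility reduction you spell out. Your version merely makes explicit the Turing reduction and the convention that an ASCO solver must report infeasibility, which is a reasonable formalization of the same argument.
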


%Given the high complexity as shown in the above theorem, we consider a conditioned case of the SOCM problem where we assume there will be always feasible solutions. That is, there always exists an assignment of jobs to sites, together with a schedule for jobs on each site, such that the deadlines of all the jobs are respected. Our objective, however, is to find the feasible solution that minimizes the total operational cost $H$ if there are more than one feasible solution.

\section{The Algorithm}
\label{sec:algo}

Since we cannot efficiently have any guarantee on the existence of feasible solutions, we introduce \textit{admission control}, which is used to block some job requests when there is no enough computation or network resource to serve them. The problem is then relaxed and the optimal cost can be achieved at the risk that there might be jobs that cannot be accommodated. In the following, we will show how to achieve the best cost-effectiveness by site selection for jobs and how to improve admission rate by carrying out a well-designed scheduling algorithm for joint cross-site data transfer and job execution.

\subsection{Site Selection}

The incentive to transfer the data for a job from one site to another is to reduce the total cost for processing the job. As a result, the following two necessary (yet not sufficient) conditions have to be met: (1) Data transfer is possible subject to time limit; (2) The total cost for processing the job is reduced as a result of the data transfer.

Assume there is a job $J_j \in \mathbf{J}$ and the data for this job is stored at site $S_{d,j}$. According to the above two conditions, a candidate site $S_c \in \mathbf{S} \backslash S_{d,j}$ that can host job $J_j$ has to satisfy the following inequalities.
\begin{align*}
\frac{l_j}{C_c} + \frac{d_j}{\min (B_{d,j}^{out}, B_c^{in})} & \leq  b_j - a_j \label{ineqn:deadline-constraint}\numberthis\\
l_j \cdot P_c + d_j \cdot (Q_{d,j}^{out} + Q_c^{in}) & <  l_j \cdot P_i \label{ineqn:cost-constraint}\numberthis
\end{align*}
Based on the above two conditions, for each job $J_j$ we carry out a screening process, which aims at removing the sites that cannot host the job. We denote by $\mathbf{S}_j$ the set of the valid candidate sites for job $J_j$. If $\mathbf{S}_j = \emptyset$, the job will be by default assigned to site $S_{d,j}$; otherwise we choose the site from all the candidate sites $\mathbf{S}_j$ that gives the minimized cost for executing the job (including the cost for data transfer). We denote the chosen site by $S_{p,j}$. The assignment of jobs will be completed when we finish repeating the above process for every job.

%\begin{algorithm}[!t]
%\caption{\label{algo:min-cost} \textbf{Min-Cost-Assign}}
%\textbf{Input:} set of data center sites $\mathbf{S}$, set of jobs $\mathbf{J}$ \\
%\textbf{Output:} assignment $x_{i,j}$

%\begin{algorithmic}[1]
%\FOR{$J_j \in \mathbf{J}$}
%	\STATE Find out the collection of candidate sites $\mathbf{S}_j$, where each $S_c \in \mathbf{S}_j$ satisfies inequality (\ref{ineqn:deadline-constraint}) and (\ref{ineqn:cost-constraint})
%	\STATE Sort the sites in $\mathbf{S}_j \cap S_{d,j}$ in increasing order according to the cost for job $J_j$
%	\STATE Select the 
%\ENDFOR
%\end{algorithmic}
%\end{algorithm}

\subsection{Two-phase Job Scheduling}

Once we have decided the site that each job will be preferably assigned, the problem becomes how to schedule the data transfer and job execution for the jobs at each site. Having in mind that there might be jobs that are not admitted into the system, our implicit objective for the scheduling would be to accommodate as more jobs as possible in order to maintain higher tenant satisfaction.

We first describe a similar problem that has been widely studied in traditional job scheduling literature: We are given three sets of parallel machines, denoted by $\mathbf{A}, \mathbf{B}$ and $\mathbf{C}$, respectively. Note that the machines in each set can be heterogeneous with non-uniform processing capabilities. We are also given a set of jobs, each of which consists of three operations that have to be carried out on the three sets of machines stage by stage, respectively. The objective of the problem is to minimize the total completion time of all the jobs. If the operations of each job can be flexibly assigned to any of the machines in the corresponding class, the problem is called {\em Flexible Flow Shop Scheduling with Parallel Machines} and it has been shown to be strongly NP-hard even when the machines are uniform \cite{Kyparisis-OR-2006}. 

Our problem inherits the same problem structure but differs in the following two aspects: (1) The machine for carrying out the operation of each job in every stage is fixed. (2) The machines for carrying out the first two operations of each job are coupled, i.e., they will be occupied at the same time. The problem under the two constraints is still NP-hard. The proof can be straightforwardly conducted by a reduction from the traditional flow shop scheduling problem, which is known to be NP-hard with at least two machines (three machine sets in our case). Our objective, however, instead of minimizing the total job completion time, is to maximize the number of jobs that could be completed before given deadlines. We call this problem {\em Flow Shop Scheduling with Coupled Resources (FSS-CR)}.
% and a simple example with three parallel machines in each machine set is depicted in Figure~\ref{fig:fss-cr-model}. 
The term {\em coupled resources} implies that the scheduling decisions for the first two operations of each job have to be jointly done at the same time.

%\begin{figure}
%\centering
%\includegraphics[scale=0.5]{./figs/fss-cr-model.pdf}
%\caption{\label{fig:fss-cr-model}An example for the Flow Shop Scheduling with Coupled Resources problem where there are three machines in each machine set and there are two jobs each with three stages that have been assigned to the machines in each stage.}
%\end{figure}

\subsection{FSS-CR}

The high complexity of the problem implies a vast searching space, and we thus try to explore some useful insights and then design efficient heuristics that could generate comparably good results within very short time. Under the objective of accommodating as more jobs as possible, our algorithm aims at reducing the time wasted at every stage due to resource contention.

While sharing a common deadline, the decision making process for scheduling each job can be divided into two independent phases: \texttt{dtrans} (data transfer, the first two stages) and \texttt{comp} (computation, the third stage). We notice that once we fix the scheduling for the jobs in phase \texttt{comp}, the deadlines for phase \texttt{dtrans} of jobs can be accordingly determined. This is achievable as the set of jobs that will be processed at each site is already knowable after the site selection process described before. The problem is then decomposed into two sub-problems, i.e., scheduling for \texttt{comp} and scheduling for \texttt{dtrans}.

\subsubsection{Scheduling for Computation}
As in phase \texttt{comp} jobs will be processed independently at each site, we focus on an arbitrary site $S_i \in \mathbf{S}$. The main idea behind this is to ensure that all the jobs at this site can be completed before their deadlines and every job starts its computation phase as late as possible to make time for the data transfer phase as resource contention is more severe in data transfer due to resource coupling. To this end, we define an auxiliary problem called {\em Reverse-Job Scheduling (RJS)}. Given a set of jobs with deadline constraints to be processed on a single machine, the goal of the RJS problem is to decide the order of jobs to be processed such that the average starting time of all the jobs is maximized.

We notice that the RJS problem can be transformed into a single-machine job scheduling problem by treating job deadlines as arriving times and reversing the job execution from the end to the beginning. As a result, the problem becomes that given a set of jobs that arrive at a single machine, we design a schedule for the jobs such that the average completion time is minimized. A simple example of the transformation as well as the comparison of possible schedules (here we consider First Come First Serve (FCFS), Earliest Deadline First (EDF) and Shortest Remaining Time First (SRTF)) is shown in Figure~\ref{fig:fss-cr-comp}. Through the comparison we observe that the preemptive scheduling policy SRTF gives the minimal average completion time compared to non-preemptive policies FCFS and EDF. Consequently, the total time saved in the \texttt{comp} phase would be maximized.

%\lin{the scheduling algorithm chosen for the computation phase has to be further verified.}

\begin{figure}
\centering
\includegraphics[scale=0.8]{./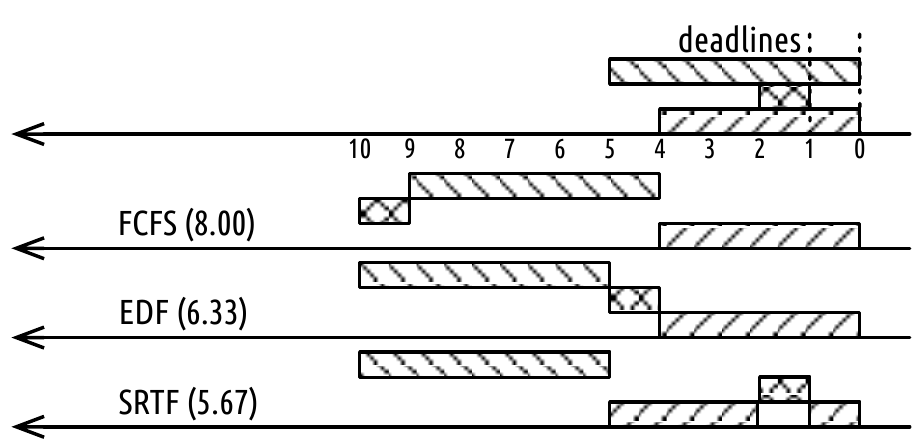}
\caption{\label{fig:fss-cr-comp}Transforming the RJS problem into a single-machine scheduling problem and the comparison of average job completion times given by different scheduling policies (i.e., FIFO, EDF, and SRTF).}
\end{figure}

\subsubsection{Scheduling for Data Transfer}

The starting time determined for each job in phase \texttt{comp} will provide a strict deadline for each job in phase \texttt{dtrans}. Given this deadline, the problem of scheduling for data transfer becomes a network flow scheduling problem: each job $J_j \in \mathbf{J}$ represents a flow $F_j$ with size of $d_j$ from site $S_{d,j}$ to site $S_{p,j}$, with arriving time $a_j$ and deadline $b_j$. Our goal is to schedule the flow transmissions (i.e., sending or receiving) on a semi-clos network while guaranteeing flow deadlines. Our solution to this problem is based on two iterative processes where the first process is admission control based on pruning, while the second one is scheduling. Before presenting the algorithm, we first carry out a flow size normalization process. 

%Then, the scheduling process works in a similar way as pruning. In each iteration, we find the most critical interval and schedule the flows in this interval at the corresponding site using the Earliest Deadline First (EDF) policy. The above process is repeated until all the flows have been scheduled. 
\bpara{Flow Size Normalization.}Normalization is necessary for computing the most intensive time interval, as the maximal transmission rate for each flow can be different due to the fact that network nodes (i.e., sites) we assumed here can have different capacities. The normalization process is straightforward, i.e., assuming the capacity of each network port as one. To this end, for each flow we divide its size by the maximal transmission rate it can achieve when being transmitted on the network. More formally, the normalized size of flow $F_j$ is given by 
\begin{equation}
|F_j| = \frac{d_j}{\min(B_{d,j}^{out}, B_{p,j}^{in})}
\end{equation}
where $B_{d,j}^{out}$ and $B_{p,j}^{in}$ are the egress bandwidth of site $S_{d,j}$ and the ingress bandwidth of site $S_{p,j}$, respectively.

\bpara{Admission Control.}
The admission control process is designed to prune the flow set by removing the flows that will not be able to be completed, such that the feasibility of the transmissions of the residual flows can be guaranteed. Denote by $\mathbf{F}_i$ the set of flows that will be routed through site $S_i$. We first provide the following definition.
\begin{definition}
The intensity of a site $S_i$ in a given time interval $I = [a, b]$ is defined as the average normalized amount of data to be transmitted by $S_i$ in this interval, i.e.,
\begin{equation}
\delta(S_i, I) = \frac{\sum_{[a_j, b_j] \subseteq{[a, b]} \wedge F_j \in \mathbf{F}_i} |F_j|}{a \sim b}
\end{equation}
where $a \sim b$ denotes the total time in which site $S_i$ is free.
\end{definition}

It is intuitive that $\delta(S_i, I)$ has to satisfy $\delta(S_i, I) \le 1$, meaning that the maximal intensity is constrained by the normalized capacity of each site; otherwise there will be flows that cannot meet their deadlines. Our design for admission control is based the inequality $\max\{\delta(S_i, I)\} \le 1$.

\begin{definition}
For a given site $S_i$, a time interval $I = [a, b]$ is defined as a critical interval if it maximizes $\delta(S_i, I)$.
\end{definition}

\begin{definition}
The most critical time interval $I^* = [a^*, b^*]$ is defined as the time interval that maximizes $\delta(S_i^*, I^*)$ among all sites. Site $S_i^*$ is the corresponding most critical site and flow set $\mathbf{F}_i^* = \{F_j \in \mathbf{F}_i \wedge [a_j, b_j] \subseteq I\}$ is the corresponding critical flow set.
\end{definition}

Based on the above definitions, the pruning process works iteratively as follows: in each iteration we search for the most critical time interval $I^*$. Once this interval has been found, we check if feasibility can be satisfied in this interval, i.e., whether or not $\delta(S_i^*, I^*) \le 1$. If not, we remove the flow $F_j \in \mathbf{F}_i^*$ which has the maximized $d_j / (a_j \sim b_j)$, meaning it contributes the most to the intensity of interval $I^*$. By removing this flow, the intensity of this interval will be reduced. We repeat the above process until $\delta(S_i^*, I^*) \le 1$ is satisfied.

\begin{algorithm}[!t]
\caption{\label{alg:mcf-edf} \textbf{MCF-EDF}}
%\textbf{Input:} set of sites $\mathbf{S}$, set of flows $\mathbf{F} = \cup \mathbf{F}_i$ \\
%\textbf{Output:} schedule for every $F_j \in \mathbf{F}$

\begin{algorithmic}[1]
\WHILE{$\exists S_i \in \mathbf{S}, \mathbf{F}_i \neq \emptyset$}
	\STATE \emph{\textcolor{gray}{// Search for the most critical interval}}
	\STATE $(S_i^*, I^*) \leftarrow \argmax_{(S_i, [a, b])}{\{\delta(S_i, [a, b])\}}$
	\STATE $\mathbf{F}_i^* = \{ F_j~|~F_j \in \mathbf{F}_i \wedge [a_j, b_j] \subseteq [a, b] \}$
	\STATE \emph{\textcolor{gray}{// Schedule the flows using EDF}}
	\FOR{$F_j \in \mathbf{F}_i^*$}
		\STATE $a_j' = \sum_{F_k \in \mathbf{F}_i^* \wedge b_k < b_j}|F_k|$
		\STATE $b_j' = \sum_{F_k \in \mathbf{F}_i^* \wedge b_k < b_j}|F_k| + |F_j|$
	\ENDFOR
	\STATE \emph{\textcolor{gray}{// Update the available time on the affected sites}}
	\FOR{$F_j \in \mathbf{F}_i$}
		\STATE $\mathbf{F_i} \leftarrow \mathbf{F_i} \backslash F_j$ for $F_j \in \mathbf{S_i}$
		\STATE Mark $[a_j', b_j']$ as unavailable on site $S_i$ if $F_j \in \mathbf{F}_i$
	\ENDFOR
\ENDWHILE
\end{algorithmic}
\end{algorithm}

\bpara{Most Critical First with EDF.}We now discuss how to decide the schedule for the rest flows. We design an algorithm MCF-EDF (Most Critical First with EDF) for flow scheduling, which is listed in Algorithm~\ref{alg:mcf-edf}. The algorithm is conducted on an iterative process: in each iteration, MCF-EDF first finds the most critical interval $I^*$ and its corresponding most critical site $S_i^*$. The flows that fall into interval $I^*$ is denoted by set $\mathbf{F}_i^* = \{ F_j~|~F_j \in \mathbf{F}_i \wedge [a_j, b_j] \subseteq [a, b] \}$. After that, we schedule the flows in the interval $I^*$ using the EDF policy, from which the spanning time $[a_j', b_j']$ of each flow $F_j \in \mathbf{F}_i^*$ will be determined. Finally, we update the available time intervals on all the sites that have been affected, i.e., sites that contain flows from set $\mathbf{F}_i^*$ that have just been scheduled in this iteration. The algorithm terminates when all the flows have been scheduled.

\begin{theorem}
The schedule generated by MCF-EDF will guarantee that all the residual flows after the pruning process can meet their deadlines.
\end{theorem}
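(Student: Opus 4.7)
The plan is to proceed by induction on the iterations of MCF-EDF, maintaining the invariant that on the \emph{residual} problem (residual flow set plus updated available time on every site), we still have $\max_{S_i, I} \delta(S_i, I) \le 1$. The base case is immediate: the pruning process described just before the algorithm terminates exactly when $\max_{S_i, I} \delta(S_i, I) \le 1$ holds over the post-pruning flow set, with all original time available. So at the start of MCF-EDF this invariant is in place.

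For the inductive step I first need to argue that, within a single iteration, the EDF pass over $\mathbf{F}_i^*$ inside the most critical interval $I^* = [a^*, b^*]$ produces a deadline-feasible schedule on site $S_i^*$. The key observation is that every flow $F_j \in \mathbf{F}_i^*$ has $[a_j, b_j] \subseteq I^*$ by construction, so the sub-problem is equivalent to a single-machine scheduling instance on the interval $[a^*, b^*]$ with release times $a_j$ and deadlines $b_j$. By the classical optimality of EDF with release times and deadlines, a feasible schedule exists iff for every sub-interval $[s,t] \subseteq I^*$ one has $\sum_{[a_j, b_j]\subseteq[s,t]} |F_j| \le t-s$, and this is exactly the statement $\delta(S_i^*, [s,t]) \le 1$ which follows from the inductive invariant. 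Hence EDF succeeds, and the slots $[a_j', b_j']$ it produces all lie within $[a_j, b_j]$.

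The main obstacle is preserving the invariant after the update step, i.e.\ after we delete the flows in $\mathbf{F}_i^*$ and mark their slots as busy on the coupled endpoint sites. Consider any site $S_k$ and any interval $J$. Let $\Delta$ be the total normalized volume of flows from $\mathbf{F}_i^*$ whose EDF-assigned slot $[a_j',b_j']$ is contained in $J$; these are exactly the flows removed from the numerator of $\delta(S_k, J)$ and the time slots subtracted from its denominator. For any such flow $F_j$ we had $[a_j, b_j]\subseteq I^*$ and $[a_j',b_j']\subseteq[a_j,b_j]\subseteq J$, so the numerator and denominator of $\delta(S_k, J)$ each decrease by exactly $\Delta$. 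Since the pre-update value satisfies $\delta_{\mathrm{old}}(S_k, J) \le 1$, and $(N-\Delta)/(D-\Delta) \le N/D$ whenever $N \le D$ and $\Delta \le D$, the new intensity remains bounded by $1$. This closes the induction.

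Finally, termination and global deadline-feasibility follow because each iteration strictly reduces the residual flow set, and at termination every flow has been placed in a slot $[a_j', b_j']\subseteq[a_j, b_j]$, which is exactly the statement that deadlines are respected. The subtle point to get right in a full write-up is the accounting in the third paragraph: one has to be careful that the updated ``available time'' of $S_k$ inside $J$ really does shrink by $\Delta$ and not more (i.e.\ EDF does not place a flow's slot partly outside $J$ while its $[a_j,b_j]$ is inside $J$), which is immediate from $[a_j', b_j']\subseteq[a_j, b_j]$, and that intervals $J$ overlapping $I^*$ only partially are also handled, since any flow contributing to the change must have its window $[a_j,b_j]$ contained in both $I^*$ and $J$.
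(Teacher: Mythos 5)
Your proof takes a genuinely different route from the paper's: the paper argues by contradiction in one shot (a missed deadline would force an interval of intensity greater than one, contradicting the pruning condition), whereas you run an induction over the iterations of MCF-EDF and try to show that the residual instance always satisfies $\max_{S_i,I}\delta(S_i,I)\le 1$. Your route is the more honest one, because the paper's argument never explains why the pruning condition, which was verified on the \emph{original} instance, still applies to the residual instance at the iteration where the hypothetical violating flow is scheduled; your invariant is exactly the missing bookkeeping. Unfortunately, the inductive step does not go through as written.

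The gap is in your third paragraph, and you have in fact pinpointed it yourself and then dismissed it incorrectly. You claim that any flow affecting $\delta(S_k,J)$ must have its window $[a_j,b_j]$ contained in $J$, so that the numerator and the denominator (free time) of $\delta(S_k,J)$ drop by the same amount $\Delta$ and $(N-\Delta)/(D-\Delta)\le 1$ follows. But a flow $F_j\in\mathbf{F}_i^*$ whose window \emph{straddles} the boundary of $J$ (so $[a_j,b_j]\not\subseteq J$, hence $F_j$ never appeared in the numerator of $\delta(S_k,J)$) can perfectly well receive an EDF slot $[a_j',b_j']\subseteq[a_j,b_j]$ that lies wholly or partly inside $J$ on the coupled endpoint site $S_k$. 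Marking that slot unavailable shrinks the denominator of $\delta(S_k,J)$ with no compensating drop in the numerator, so the intensity of $J$ can be pushed above $1$ and the invariant breaks. Concretely: let $S_{i^*}$ carry a flow $F_m$ with window $[0,5]$ and volume $5$, plus a flow $F_j$ with window $[0,10]$ and volume $5$ whose other endpoint is $S_k$; EDF on $I^*=[0,10]$ places $F_j$ at $[5,10]$, and if $S_k$ also carries a flow with window exactly $[5,10]$ and volume $5$ (all intensities are at most $1$, so pruning admits everything), that flow is left with zero free time before its deadline. To close the induction you would need an additional argument that the most-critical-first order, together with a suitable tie-breaking and slot-placement rule, prevents such straddling slots from invading tighter intervals on coupled sites; as it stands neither your proof nor the paper's supplies it.
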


\begin{proof}
We prove it by contradiction. Assume there exists a flow $F^0$ whose deadline cannot be met and the time interval this flow has been scheduled is denoted by $I^0$. Without loss of generality, we assume all the other flows in $I^0$ can meet their deadlines. We denote by $d^0$ and $t^0$ the data volume and the amount of time assigned to flow $F^0$, respectively. We also denote by $\mathbf{F}^0$ the set of jobs that falls in interval $I^0$. According to the logic of the algorithm MCF-EDF, for all the other flows in $\mathbf{F}^0$ we have 
\begin{equation}
\frac{\sum_{F_j \in \{\mathbf{F}^0 \backslash F^0\}} d_j}{\sum_{F_j \in \{\mathbf{F}^0 \backslash F^0\}} t_j} = 1
\end{equation}
where $d_j$ and $t_j$ represents the data volume and the assigned amount of time for flow $F_j$. Consequently, combining with inequality $d^0/t^0 > 1$ due to the violation of the deadline of flow $F^0$, we can derive that
\begin{equation}
\frac{\sum_{F_j \in \mathbf{F}^0} d_j}{\sum_{F_j \in \mathbf{F}^0} t_j} > 1
\end{equation}
which contradicts the fact that no such interval will exist after the pruning process carried out before applying MCF-EDF. That completes the proof.
\end{proof}

\section{Experiments}
\label{sec:exp}

\begin{table}
\caption{\label{tb:para}Parameter Settings}
\centering
\begin{tabular}{ r|l }
  \hline
  \multicolumn{2}{c}{\textbf{Parameter Settings for DCs}} \\
  \hline
  Processing capacity $C_i$ & $\mathcal{U}(1, 9)$ \\
  Upload network bandwidth $B_i^{out}$ & $\mathcal{U}(1, 5)$ \\
  Download network bandwidth $B_i^{in}$ & $\mathcal{U}(1, 10)$ \\
  Electricity price $P_i$ & $\mathcal{N}(10, 3)$ \\
  Upload network price $Q_i^{out}$ & $\mathcal{N}(10, 3)$ \\
  Download network price $Q_i^{in}$ & $\mathcal{N}(5, 3)$ \\
  \hline
  \hline
  \multicolumn{2}{c}{\textbf{Parameter Settings for Jobs}} \\
  \hline
  Job arrival time and deadline $a_j, b_j$ & $\mathcal{U}(1, 100)$ \\
  Volume of input data $d_j$ & $\mathcal{N}(10, 5)$ \\
  Computation workload $l_j$ & $\mathcal{N}(6, 5)$ \\
  \hline
\end{tabular}
\end{table}

We validate the performance of the proposed assignment and scheduling algorithms by numerical simulations and present some preliminary results in this section.

We developed a discrete-time multi-data center simulator in Python, with the proposed assignment and scheduling algorithms implemented. The simulator exposes interfaces for various parameters in our model for both the data center and the job and the values for those parameters were generated randomly following the distributions summarized in Table~\ref{tb:para}. The distributions chosen for those parameters here are only based on experience and they serve only as a part of the primitive evaluation. Real values can be obtained or estimated in real-world implementations. The initial placement of the data set for each job to DC site is accomplished uniformly at random. 

We compare the proposed algorithm with a baseline approach. The baseline is defined as a greedy process in which jobs are assigned to the same DC site where the associated data resides. The scheduling of the jobs is done following a FCFS manner complemented with EDF for jobs arrived at the same moment, which is considered to be the de facto scheduling algorithm used in current cloud systems. We focus on two aspects of interest: admission rate and total cost. We notice that the optimal assignment we proposed can lead to few hotspot sites in the system, thought it produces the best cost. Alternatively, we make a small adjustment where for each job $J_j \in \mathbf{J}$, instead of choosing the site with the best cost, we choose a site uniformly at random from the candidate site set $\mathbf{S}_j$. This is a tiny trick made for a better tradeoff between admission rate and total cost, meaning that we compromise a bit gain on cost reduction but increase potentially the chance to achieve a better admission rate.

\begin{figure}[t!]
    \centering
    \subfigure[Small scale with 20 DC sites]{
        \centering
        \includegraphics[scale=0.42]{./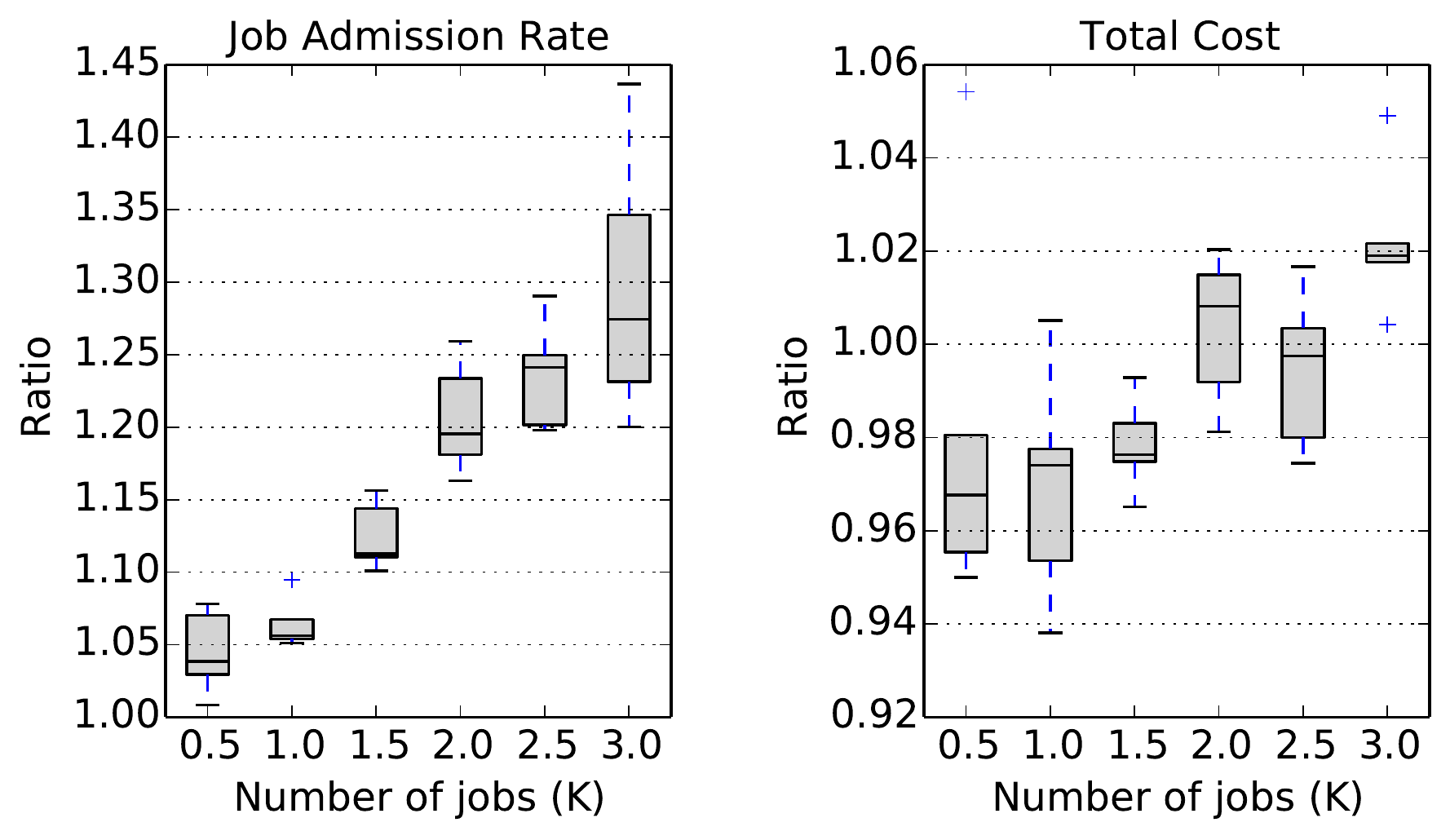}}
    \hspace{0.7cm}
    \subfigure[Large scale with 100 DC sites]{
        \centering
        \includegraphics[scale=0.42]{./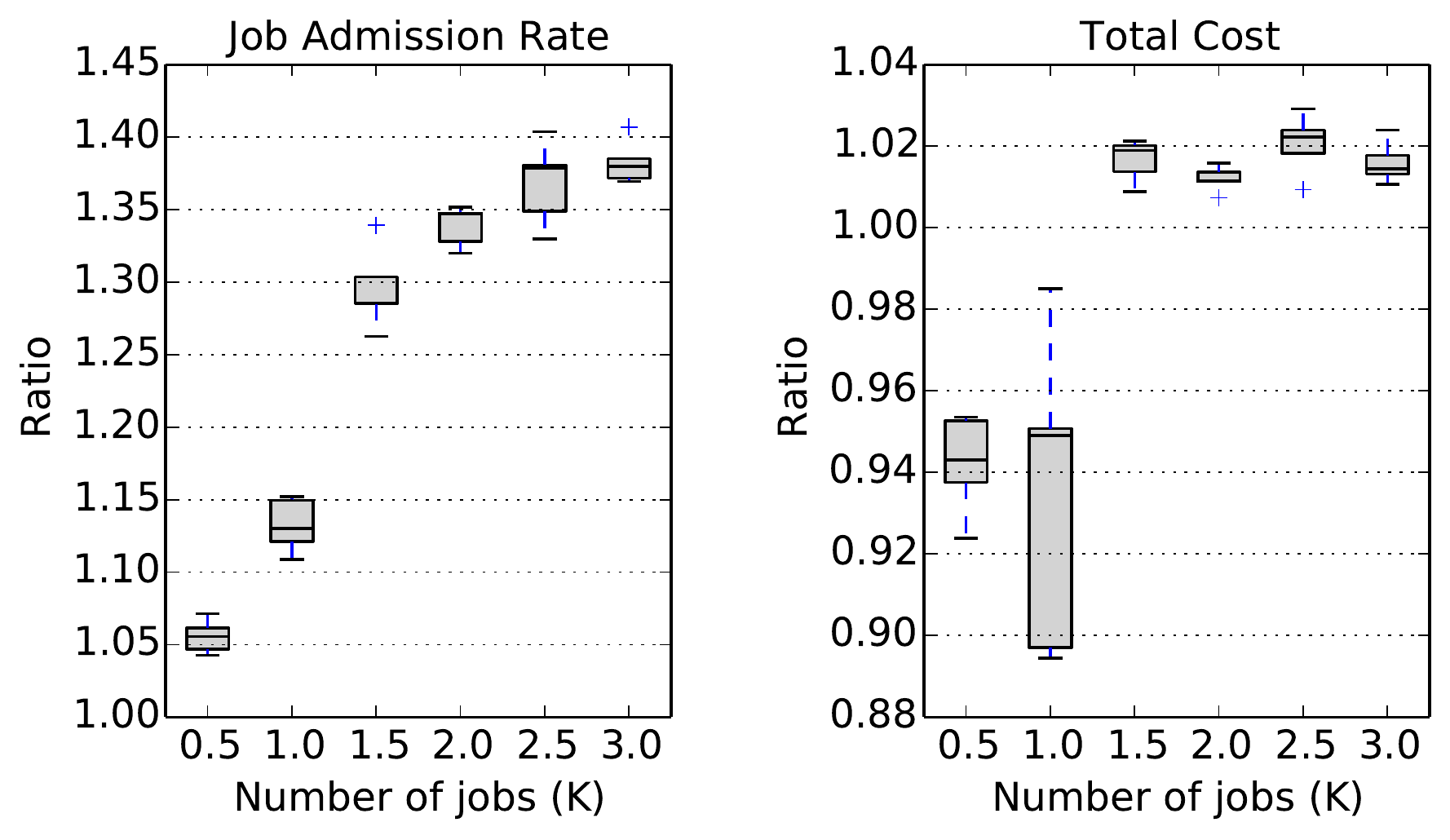}}
    \caption{\label{fig:results}Comparisons on job admission rate and total cost.}
\end{figure}

The experimental results are shown in Fig~\ref{fig:results}. All the values are normalized by the ones we obtained from the baseline approach and the values are averaged over five independent runnings. We tested with both a small scale (with 20 DC sites) and a large scale (with 100 DC sites) scenarios.  Experiments in both scenarios confirm that ($\mathit{i}$) the admission rate is enlarged in our approach and ($\mathit{ii}$) the proposed assignment and scheduling algorithm together can help reduce largely the energy cost. The former is clearly demonstrated on the figures while the latter follows by the fact that the total cost retains at more or less the same level while the admission rate is significantly improved. This is mainly due to the fact that the assignment and scheduling algorithm expand the solution space by enabling data transfer between DC pairs and by introducing elaborate scheduling mechanisms for joint data transfer and job execution.

\section{Discussion}
\label{sec:disc}

\bpara{Trade-offs between Admission Rate and Cost Efficiency.}
Admission rate and cost efficiency are two contradictory goals in our model. Reducing total cost means that more jobs will be transferred to DC sites with cheaper prices. However, this will lead to unexpected congestion on those favorable sites, resulting in a low admission rate. The trick of randomly choosing a site from multiple candidate sites in our experiments is a primary attempt to solving this problem. To exploit the right balance between the two goals, job assignment has to be done in a more sophisticated manner, e.g., based on the cost distribution of the sites for each job. 

\bpara{Privacy.}
Privacy concerns (such as in the EU \cite{EU-privacy-2014}) may result in more regulatory constraints on data movement. Our model can incorporate this scenario easily by introducing extra constraints for job assignment, while the computational complexity of the problem remains the same. In the sequel, the designed algorithm can still be applied, with possible adjustment such as pruning the candidate site set for each job to remove the unsatisfied sites due to data privacy constraints.

\bpara{Geo-distributed Processing and Analytics.} For simplicity the model we used in this paper assumes that the data for a job only resides at a single data center. However, this assumption does not necessarily hold in reality. For the scenarios where data can be spanning across multiple data centers, instead of gathering the data from all the data centers to a central point, several geo-distributed analytical frameworks \cite{Pu-GEO-2015, Vulimiri-Iridium-2015} have been recently proposed, where data movement is minimized or is subject to regulatory constraints. While it is generally achievable for private clouds as the management authority is owned by a single entity, applying the same technique in the public cloud domain is not trivial. We believe that cloud brokering or federation also brings opportunity for geo-distributed data analytics for public clouds. The most straightforward first step would be to develop a supporting framework for cloud brokers to provider added-value geo-distributed data processing services for the tenants. We leave this line of research for future work.

\section{Conclusions}
\label{sec:conc}

This paper studies the problem of achieving energy efficiency in public cloud systems from cloud broker's point of view. We provided the formulation of the problem as well as necessary analysis on its computational complexity. By introducing admission control, we simplified the problem and proposed efficient algorithm for the resulting problem of scheduling jobs for maximized admission rate. The job scheduling consists in two coupled phases namely data transfer and job execution, thus the proposed algorithm first decouples the two phases and then, it devotes to efficient scheduling for each of the phases. Experiments in both small and large scales confirmed the hypothesis that considerable reduction on energy consumption can be achieved through elaborate job assignment and scheduling by the broker in public cloud systems.

\bibliographystyle{abbrv}
\bibliography{refs}  % sigproc.bib is the name of the Bibliography in this case

\balance

\end{document}